\theoremstyle{definition}
\newtheorem{theorem}{Theorem}
\begin{document}
    \title{Hyper-VIB: A Hypernetwork-Enhanced Information Bottleneck Approach for Task-Oriented Communications}

\author{\IEEEauthorblockN{
Jingchen Peng\IEEEauthorrefmark{1}, 
Chaowen Deng\IEEEauthorrefmark{1}, 
Yili Deng\IEEEauthorrefmark{1}, 
Boxiang Ren\IEEEauthorrefmark{1} and
Lu Yang\IEEEauthorrefmark{2}$^\ddag$}
\IEEEauthorblockA{\IEEEauthorrefmark{1}Department of Mathematical Sciences, Tsinghua University, Beijing, China}
\IEEEauthorblockA{\IEEEauthorrefmark{2}Wireless Technology Lab, Central Research Institute, 2012 Labs, Huawei Tech. Co. Ltd., China}
  \IEEEauthorblockA{Email: yanglu87@huawei.com}
  \vspace{-2em}
\thanks{The first two authors contributed equally to this work and $\ddag$ marked the corresponding author. }
 }


\maketitle


\begin{abstract}

This paper presents Hyper-VIB, a hypernetwork-enhanced information bottleneck (IB) approach designed to enable efficient task-oriented communications in 6G collaborative intelligent systems. Leveraging IB theory, our approach enables an optimal end-to-end joint training of device and network models, in terms of the maximal task execution accuracy as well as the minimal communication overhead, through optimizing the trade-off hyperparameter. To address computational intractability in high-dimensional IB optimization, a tractable variational upper-bound approximation is derived. Unlike conventional grid or random search methods that require multiple training rounds with substantial computational costs, Hyper-VIB introduces a hypernetwork that generates approximately optimal DNN parameters for different values of the hyperparameter within a single training phase. Theoretical analysis in the linear case validates the hypernetwork design. Experimental results demonstrate our Hyper-VIB's superior accuracy and training efficiency over conventional VIB approaches in both classification and regression tasks.
\end{abstract}

\begin{IEEEkeywords}
Task-oriented communications, hyperparameter optimization, hypernetwork, information bottleneck, variational inference 
\end{IEEEkeywords}

\section{Introduction}
The accelerated evolution of AI has propelled its adoption across critical domains, including autonomous systems, immersive technologies (e.g. AR/VR), and holographic communication etc. These widespread integrations, identified by the International Telecommunication Union (ITU) as core deployment scenarios for emerging 6G wireless systems \cite{recommendation2023framework}, have triggered unprecedented growth in data transmission requirements, fundamentally reshaping traditional communication paradigms. Consequently, conventional data-oriented communication frameworks \cite{yang2019data}, historically focused on complete data reconstruction and intrinsic signal properties, are giving way to task-oriented methodologies \cite{shao2021learning, shi2023task, xie2023robust}. These emerging approaches focus on optimizing transmission for specific objectives like real-time inference and distributed intelligence, emphasizing efficiency and relevance over exhaustive data delivery.\par
Despite significant advancements, the absence of a comprehensive theoretical framework for analyzing trade-offs between communication costs and inference performance remains a key obstacle in distributed AI. Information bottleneck (IB) theory \cite{tishby2000information} provides a promising foundation for quantifying information loss and transmission efficiency. Recent studies \cite{shao2021learning, xie2023robust} have applied IB to communication system design, aiming to balance overhead with task accuracy. Extensions such as distributed IB (DIB) theory \cite{moldoveanu2021network, moldoveanu2023network} adapt IB to multi-device scenarios \cite{shao2022task}. However, these approaches often overlook channel noise and, critically, fail to enable collaborative training between device and network sides.\par
While recent work \cite{10619157} has addressed both channel noise and collaborative training concerns, training collaborative intelligent systems requires determining hyperparameters that balance performance and cost. Although current methods, including grid search \cite{10619157}, random search, and Bayesian optimization \cite{snoek2015scalable}, perform well in low-dimensional settings, they become computationally prohibitive in higher dimensions due to the independent training for each hyperparameter configuration. Hypernetworks, introduced by \cite{ha2016hypernetworks}, are neural networks designed to generate weights for a target network and offer a promising alternative for hyperparameter tuning. Recent studies \cite{bae2023multirate, mackayself} leverage hypernetworks to approximate optimal response functions \cite{gibbons1992primer}, mapping hyperparameters directly to optimized model parameters. In communication systems, hypernetworks have also been applied to characterize the varying channel states in the encoder-decoder design \cite{xie2024deep}, with higher adaptability and efficiency. However, integrating the hypernetwork into IB-based collaborative intelligent system design remains unexplored.

In this paper, we introduce Hyper-VIB, a hypernetwork-enhanced framework grounded in IB theory for task-oriented communications. This approach not only enables rapid and efficient identification of optimal hyperparameters, overcoming the large computation overhead and time delay in conventional methods, but also dynamically adapts to varying channel conditions, with its effectiveness rigorously proved through theoretical analysis. Our main contributions are as follows: 1) The first scheme that combines hypernetworks with IB theory in communication systems, establishing an intelligent system for accelerated hyperparameter optimization. The system can thus adaptively select the optimal configurations based on task requirements and channel states, enabling efficient task-oriented communications; 2) Theoretical validation of the Hyper-VIB framework through optimality analysis of linear models. We establish the existence of a closed-form optimal response function that validates the hypernetwork design, providing foundational theoretical guarantees for Hyper-VIB's efficacy; 3) Development of the Hyper-IB framework with support for different kinds of intelligent tasks, including the classification tasks, and the regression tasks. Extensive experiments on image classification and wireless localization demonstrate that Hyper-VIB significantly accelerates model training while maintaining competitive performance, highlighting its broad applicability and superior performance.


\section{System Model and Problem Formulation}
This section introduces the device-network collaborative intelligence system model, illustrated in Fig.~\ref{fig: system model}. In addition, our optimization problem based on the IB theory for a single link of device-network collaboration is formulated \cite{shao2021learning}.
\begin{figure}[t]
  \centering
  \includegraphics[width=1\linewidth]{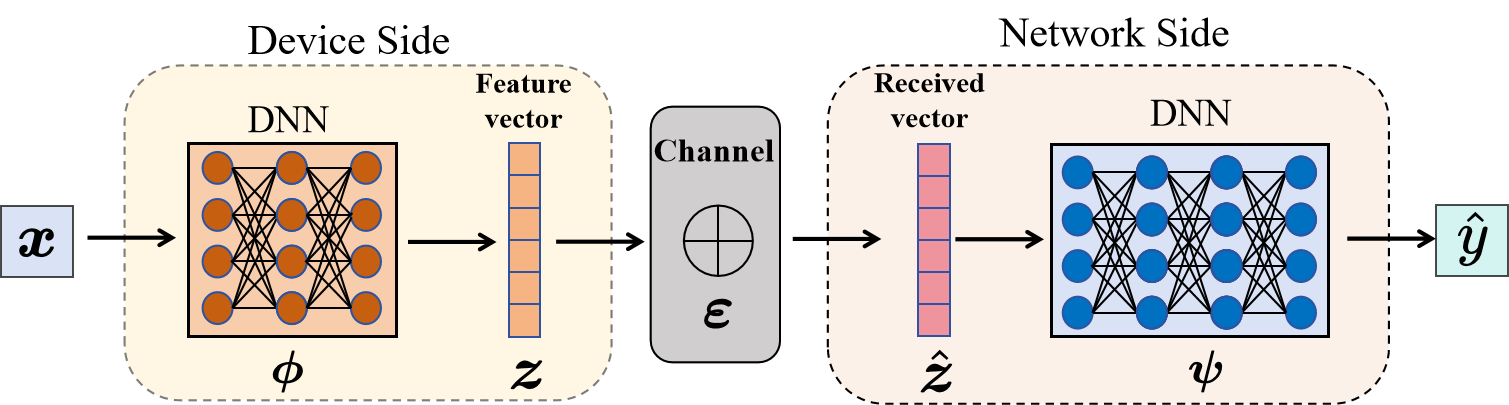}
  \caption{The system model of device-network collaborative intelligence.}
  \label{fig: system model}
  \vspace{-1.5em}
\end{figure}

\subsection{System Model and Data Transmission Chains}\label{subsec: system model}
The input data $\boldsymbol{x}$ at the mobile device and the target variable $\boldsymbol{y}$ (e.g., a label) are considered as the realizations of the random variables $(X, Y)$, characterized by the joint distribution $p(\boldsymbol{x}, \boldsymbol{y})$. Given the dataset $\{\boldsymbol{x}^{(m)}\}_{m=1}^M$, the device and network in Fig. \ref{fig: system model} should collaboratively infer the targets $\{\boldsymbol{y}^{(m)}\}_{m=1}^M$. 

The device, upon receiving $\boldsymbol{x}$, extracts and encodes a feature representation $\boldsymbol{z}$. We model this combined feature extraction and encoding process as a probabilistic encoder $p_{\boldsymbol{\phi}}(\boldsymbol{z}|\boldsymbol{x})$, where $\boldsymbol{\phi}$ represents the trainable DNN parameters on the device. To establish a well-defined probabilistic model for the device, we utilize the reparameterization trick \cite{kingma2015variational}. Then the conditional distribution $p_{\boldsymbol{\phi}}(\boldsymbol{z}|\boldsymbol{x})$ is modeled as a multivariate independent Gaussian distribution, that is, 
\begin{equation}\label{eq: conditional x_k to z_k}
p_{\boldsymbol{\phi}}(\boldsymbol{z}|\boldsymbol{x}) = \mathcal{N}(\boldsymbol{z}|\boldsymbol{\mu},\boldsymbol{\Theta}).
\end{equation}
Here, the mean vector $\boldsymbol{\mu} \in \mathbb{R}^d$ is determined by the DNN-based function $\boldsymbol{\mu}(\boldsymbol{x},\boldsymbol{\phi})$ and the covariance matrix $\boldsymbol{\Theta}$ is structured as a diagonal matrix $\mathrm{diag} \{\theta_{1}^2,\dots,\theta_{d}^2\}$, where $\boldsymbol{\theta} = (\theta_{1},\dots, \theta_{d})$ is produced by the DNN-based function $\boldsymbol{\theta}(\boldsymbol{x},\boldsymbol{\phi})$. 

Subsequently, the encoded feature $\boldsymbol{z}$ undergoes the wireless channel to the BS (i.e., network side), obtaining the received feature $\hat{\boldsymbol{z}}$. This transmission channel, characterized by the conditional distribution $p(\tilde{\boldsymbol{z}}|\boldsymbol{z})$, is modeled as a non-trainable layer. Without loss of generality, we consider an additive white Gaussian noise channel, resulting in $\hat{\boldsymbol{z}} = \boldsymbol{z} + \boldsymbol{\varepsilon}$, where $\boldsymbol{\varepsilon} \sim \mathcal{N}(\boldsymbol{0},\sigma^2 \boldsymbol{I})$ denotes the noise with variance $\sigma^2$.

Finally, the network side performs inference using the received features $\hat{\boldsymbol{z}}$. This process produces the inference output $\hat{\boldsymbol{y}}$ via a DNN-based function $\hat{\boldsymbol{y}}(\hat{\boldsymbol{z}},\boldsymbol{\psi})$, where $\boldsymbol{\psi}$ denotes the parameter set of the network-side DNN.

The above processes construct a Markov chain:
\begin{equation}\label{eq: Markov chain}
    X \leftrightarrow Z \leftrightarrow \hat{Z} \leftrightarrow \hat{Y},
\end{equation}
where the random variables $Z$, $\hat{Z}$ and $\hat{Y}$ correspond to the encoded feature $\boldsymbol{z}$, the received feature $\hat{\boldsymbol{z}}$ and the inference result $\hat{\boldsymbol{y}}$, respectively. With this formulation, the probability equation is given by $p(\hat{\boldsymbol{y}},\hat{\boldsymbol{z}},\boldsymbol{z}|\boldsymbol{x}) =p(\hat{\boldsymbol{y}}|\hat{\boldsymbol{z}})
p(\hat{\boldsymbol{z}}|\boldsymbol{z}) p_{\boldsymbol{\phi}} (\boldsymbol{z}|\boldsymbol{x})$.

\subsection{Information Bottleneck (IB) Theory}
To achieve high accuracy of collaborative AI tasks with minimal wireless transmission overhead, we aim to develop the optimal AI models for both the device and network sides. According to \cite{shao2021learning}, conventional IB theory provides an effective framework for the single-link collaboration. Specifically, consider the following performance metric
\begin{equation}\label{eq: IB}
\mathcal{C}_{\text{IB}}=-I(Y;\hat{Z}) + \beta I(X;\hat{Z}),
\end{equation}
where the objective function constitutes a weighted sum of two mutual information terms, with hyperparameter $\beta > 0$ governing the trade-off and to be discussed in detail in section \ref{sec: hypernetwork}. Here, the first mutual information $I(Y;\hat{Z})$ quantifies the information about target variable $Y$ preserved in the network-side received features $\hat{Z}$, serving as a representation of inference accuracy. Meanwhile, the second term $I(X;\hat{Z})$ evaluates the information retained in $\hat{Z}$ conditioned on $X$ through the minimum description length principle \cite{cover1999elements}, thereby characterizing communication overhead. 
Consequently, a higher AI task accuracy and lower communication overhead are signified by smaller values of \(\mathcal{C}_{\text{IB}}\), yielding the following formulation of the optimization problem
\begin{equation}\label{eq: problem}
    \vspace{-0.5em}
    \min_{\boldsymbol{\phi}, \boldsymbol{\psi}} \,\,
    \mathcal{C}_{\text{IB}},
\end{equation}
which is with the goal of identifying optimal AI models for device side (i.e., ${\boldsymbol{\phi}}$) and network side (i.e., $\boldsymbol{\psi}$) 

\subsection{Variational Upper Bound of $\mathcal{C}_{\text{IB}}$}
According to the definition of mutual information, calculating $\mathcal{C}_{\text{IB}}$ in \eqref{eq: IB} needs the conditional distribution $p(\boldsymbol{y}|\hat{\boldsymbol{z}})$ for $I(Y;\hat{Z})$ and marginal distributions ${p(\hat{\boldsymbol{z}})}$ for ${I(X;\hat{Z})}$. Computational challenges arise, however, from the inherent high dimensionality of input data, which impedes accurate distribution computation. We therefore leverage variational approximation \cite{kingma2013auto, alemi2016deep}, a widely-adopted technique for approximating intractable distributions using adjustable parameters (e.g., DNN weights). Specifically, variational distribution ${r(\hat{\boldsymbol{z}})}$ is introduced to approximate ${p(\hat{\boldsymbol{z}})}$, which is modeled as a centered isotropic Gaussian distribution $\mathcal{N}(\hat{\boldsymbol{z}}|\boldsymbol{0},\boldsymbol{I})$ \cite{alemi2016deep}. Since the inference variable $\hat{Y}$ and the target variable $Y$ share identical value ranges, the variational conditional distribution $p_{\boldsymbol{\psi}}(\boldsymbol{y}|\hat{\boldsymbol{z}})$ can be approximated by $d\left(\boldsymbol{y}, \hat{\boldsymbol{y}}\left(\hat{\boldsymbol{z}}, \boldsymbol{\psi}\right)\right)$, where $d(\boldsymbol{y},\hat{\boldsymbol{y}})$ is the loss function to measure the distortion between $\boldsymbol{y}$ and $\hat{\boldsymbol{y}}$, such as cross-entropy or mean squared error (MSE).

Based on the variational approximation method \cite{kingma2013auto, alemi2016deep}, an upper bound for $\mathcal{C}_{\text{IB}}$ can be derived using lower and upper bounds of its mutual information terms $I(Y;\hat{Z})$ and $I(X; \hat{Z})$, respectively. To be specific, the non-negativity property of KL divergence, i.e. $D_{\text{KL}}(p(\boldsymbol{y}|\hat{\boldsymbol{z}}) \| p_{\boldsymbol{\psi}}(\boldsymbol{y}|\hat{\boldsymbol{z}})) \geq 0$, implies the existence of  a lower bound for $I(Y;\hat{Z})$ as
\begin{equation}\label{eq: upbound1}
    I(Y;\hat{Z}) 
    \geq \int p(\hat{\boldsymbol{z}},\boldsymbol{y})\log \frac{p_{\boldsymbol{\psi}}(\boldsymbol{y}|\hat{\boldsymbol{z}})}{p(\boldsymbol{y})} \, d\hat{\boldsymbol{z}} \, d\boldsymbol{y}.
\end{equation}
Similarly, the non-negativity of \( D_{\text{KL}}(p(\hat{\boldsymbol{z}}) \| r(\hat{\boldsymbol{z}})) \) enables derivation of an upper bound for $I(X; \hat{Z})$ given by
\begin{equation}\label{eq: upbound2}
    I(X; \hat{Z}) \leq \int p(\hat{\boldsymbol{z}},\boldsymbol{x})\log\frac{p_{\boldsymbol{\phi}}(\hat{\boldsymbol{z}}|\boldsymbol{x})}{r(\hat{\boldsymbol{z}})}d\hat{\boldsymbol{z}} d\boldsymbol{x}.
\end{equation}

By combining \eqref{eq: upbound1} and \eqref{eq: upbound2}, we can derive an upper bound of $\mathcal{C}_{\text{IB}}$ ignoring a constant term of $H(Y)$ as
\begin{equation}\label{eq: VML-IB}
\begin{aligned}
    \mathcal{C}_{\text{VIB}}(\boldsymbol{\phi},\boldsymbol{\psi}) &= \mathbb{E}_{p(\boldsymbol{x},\boldsymbol{y})} \Big \{ \mathbb{E}_{p_{\boldsymbol{\phi}}(\hat{\boldsymbol{z}}|\boldsymbol{x})}[-\log p_{\boldsymbol{\psi}}(\boldsymbol{y}|\hat{\boldsymbol{z}})] \\
\quad &\quad\quad\quad\quad\quad\quad\quad\quad+ \beta D_{\text{KL}}(p_{\boldsymbol{\phi}}(\hat{\boldsymbol{z}}|\boldsymbol{x}) \| r(\hat{\boldsymbol{z}}))  \Big \}.
\end{aligned}
\end{equation}
Note that $\hat{\boldsymbol{z}}=\boldsymbol{z} + \boldsymbol{\varepsilon}$ with $\boldsymbol{\varepsilon} \sim \mathcal{N}(\boldsymbol{0},\sigma^2 \boldsymbol{I})$ and $\boldsymbol{\varepsilon}$ is independent of $\boldsymbol{x}$ and $\boldsymbol{z}$, 
according to \eqref{eq: conditional x_k to z_k}, the conditional probability density function for $\hat{\boldsymbol{z}}$ is expressed as
\begin{equation}\label{eq: conditional distribution}
p_{\boldsymbol{\phi}}(\hat{\boldsymbol{z}}|\boldsymbol{x}) = \mathcal{N}(\hat{\boldsymbol{z}}|\boldsymbol{\mu},\boldsymbol{\Theta}+\sigma^2\boldsymbol{I}).
\end{equation}
Therefore, the KL divergence $D_{\text{KL}}(p_{\boldsymbol{\phi}}(\hat{\boldsymbol{z}}|\boldsymbol{x}) \| r(\hat{\boldsymbol{z}}))$ is computable through its definition as
\begin{equation}\label{eq: KL divergence}
\begin{aligned}
    D_{\text{KL}}(p_{\boldsymbol{\phi}}(\hat{\boldsymbol{z}}|\boldsymbol{x}) \| r(\hat{\boldsymbol{z}})) &= \\\dfrac{1}{2} \left( \|\boldsymbol{\mu}\|^{2} + \sum_{i=1}^{d}\right.&\left.(\theta_{i}^{2} + \sigma^{2}) - \sum_{i=1}^{d}\log(\theta_{i}^{2} + \sigma^{2}) - d \right).
\end{aligned}
\end{equation}

\section{The Hyper-VIB Approach}\label{sec: hypernetwork}

The IB objective in \eqref{eq: IB} highlights the critical role of the hyperparameter $\beta$ in achieving the optimal performance.
Traditional methods, including the grid search and the random search, require retraining the model for each fixed $\beta$, leading to high computational cost and intolerable time delay.

To address this, we propose Hyper-VIB in this section, which enables efficient determination of optimal model parameters with diverse values of $\beta$ within a single training run.
Particularly, we formulate a best-response function that maps $\beta$ to the optimal model parameters and explicitly model this function with a hypernetwork. 
Furthermore, we justify the validity of the hypernetwork architecture by demonstrating that, under a linear model, it can represent the optimal response at the device end.
\vspace{-0.5em}
\subsection{Hypernetwork for Best Response}
We begin by constructing the best-response function \cite{gibbons1992primer}
\begin{equation}\label{eq: response}
\begin{aligned}
\boldsymbol{\phi}^*(\beta), \boldsymbol{\psi}^*(\beta)
= \arg \min_{\boldsymbol{\phi}, \boldsymbol{\psi}} \, \, \mathcal{C}_{\text{VIB}}(\beta).
\end{aligned}
\end{equation}
This function maps the hyperparameter $\beta$ to the corresponding optimal DNN parameters: $\boldsymbol{\phi}^*(\beta)$ for the device side and $\boldsymbol{\psi}^*(\beta)$ for the network side, both trained with $\beta$.\par
Next, the hypernetwork is employed to model the best-response function, which is a single neural network designed to generate parameters for a family of target networks, offering advantages such as weight sharing, dynamic tuning, and high efficiency \cite{ha2016hypernetworks, xie2024deep}.
In this paper, $\boldsymbol{\phi}^*(\beta)$ and $\boldsymbol{\psi}^*(\beta)$ are regarded as the target networks. Inspired by \cite{bae2023multirate}, we construct efficient hypernetworks while ensuring no significant increase in storage requirements or training time.
\begin{figure}[t]
\includegraphics[width=\linewidth]{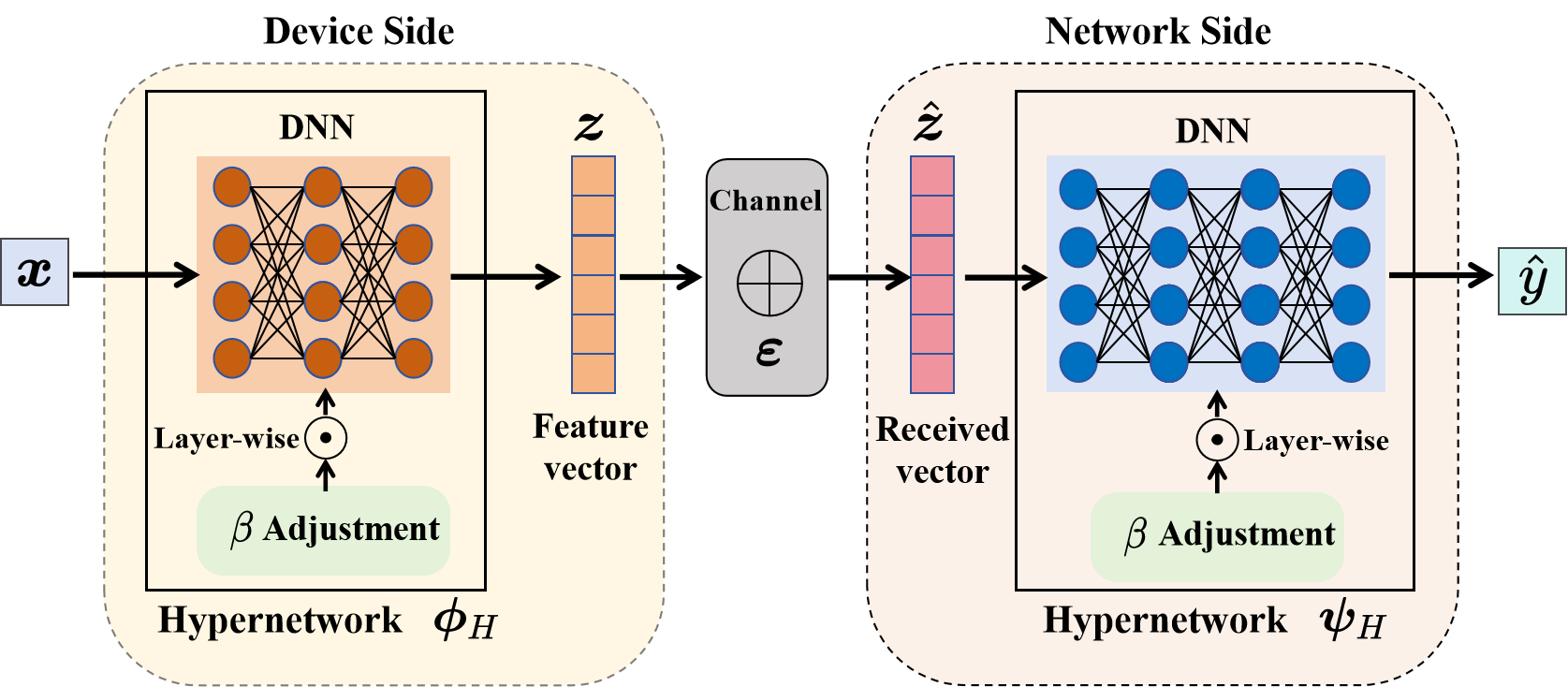}
\caption{The hypernetwork framework with DNN and $\beta$ adjustment module in Hyper-VIB.}
\label{fig: architecture for hypernetwork}
\vspace{-1.5em}
\end{figure}

As shown in Fig.\ \ref{fig: architecture for hypernetwork}, our hypernetwork consists of two main components: the original DNN and the $\beta$ adjustment module, which introduces the influence of the hyperparameter $\beta$ at each network layer.
Together, these components form two hypernetworks: $\boldsymbol{\phi}_{H}$ on the device side and $\boldsymbol{\psi}_{H}$ on the network side. When $\beta$ is fixed, these hypernetworks generate the corresponding parameters, $\boldsymbol{\phi}_{H}(\beta)$ and $\boldsymbol{\psi}_{H}(\beta)$, which approximate the best-response function in \eqref{eq: response}.
\par
Then, we focus on how the $\beta$ adjustment module applies layer-wise adjustments on the DNNs in our hypernetwork. Specifically, we consider both fully connected and convolutional layers, as described below.
\subsubsection{\textbf{Fully connected layers}} Let the input feature to the $i$-th layer be $\boldsymbol{d}^{(i-1)} \in \mathbb{R}^{D_{i-1}}$, and the output feature be $\boldsymbol{d}^{(i)} \in \mathbb{R}^{D_{i}}$. Denote the weights for this layer as $\boldsymbol{W}^{(i)} \in \mathbb{R}^{D_{i} \times D_{i-1}}$ and the biases as $\boldsymbol{b}^{(i)} \in \mathbb{R}^{D_{i}}$. The output is computed as
\begin{equation}
\boldsymbol{d}^{(i)} = \boldsymbol{W}^{(i)} \boldsymbol{d}^{(i-1)} + \boldsymbol{b}^{(i)}.
\end{equation}
With the $\beta$ adjustment module now introduced, the parameters for this layer are adjusted as follows: 
\begin{equation}\label{eq: layer-wise-1}
    \begin{aligned}
        \boldsymbol{W}^{(i)}(\beta) &= S^{(i)}\left(\boldsymbol{u}^{(i)}\log(\beta) + \boldsymbol{v}^{(i)} \right)\odot_{\text{row}} \boldsymbol{W}^{(i)}, \\
         \boldsymbol{b}^{(i)}(\beta) &= S^{(i)}\left(\boldsymbol{u}^{(i)}\log(\beta) + \boldsymbol{v}^{(i)} \right)\odot  \boldsymbol{b}^{(i)}.
    \end{aligned}
\end{equation}
Here, $S^{(i)}$ is the activation function, in particular, the sigmoid function. 
The symbols $\odot$ and $\odot_{\text{row}}$ denote element-wise and row-wise multiplication, respectively. In \eqref{eq: layer-wise-1}, $\boldsymbol{u}^{(i)}$, $\boldsymbol{v}^{(i)}$ $\in \mathbb{R}^{D_i}$ perform an affine transformation on $\log(\beta)$ to obtain scaling factors, with $S^{(i)}$ ensuring that the scaling remains within the range $[0,1]$.
Among these, $\boldsymbol{W}^{(i)}$ and $\boldsymbol{b}^{(i)}$ are the original DNN parameters, while $\boldsymbol{u}^{(i)}$ and $\boldsymbol{v}^{(i)}$ in $\mathbb{R}^{D_{i}}$ are the $\beta$ adjustment module parameters, together forming the hypernetwork parameters at the $i$-th layer. This configuration shows that, compared to the original model, our hypernetwork introduces only $2D_{i}$ additional parameters at this layer, demonstrating its compactness and memory efficiency. Therefore, the output $\boldsymbol{d}^{(i)}$ can be expressed as
\begin{equation}
\boldsymbol{d}^{(i)} \!=\! S^{(i)} \!\left(\boldsymbol{u}^{(i)} \log(\beta) \!+\! \boldsymbol{v}^{(i)}\right) \!\odot \!\left(\boldsymbol{W}^{(i)} \boldsymbol{d}^{(i-1)} \!+\! \boldsymbol{b}^{(i)}\right).
\end{equation}
The layer-wise hypernetwork design is depicted in Fig. \ref{fig: layer-wise architecture for hypernetwork}.

\begin{figure}[t]
\includegraphics[width=\linewidth]{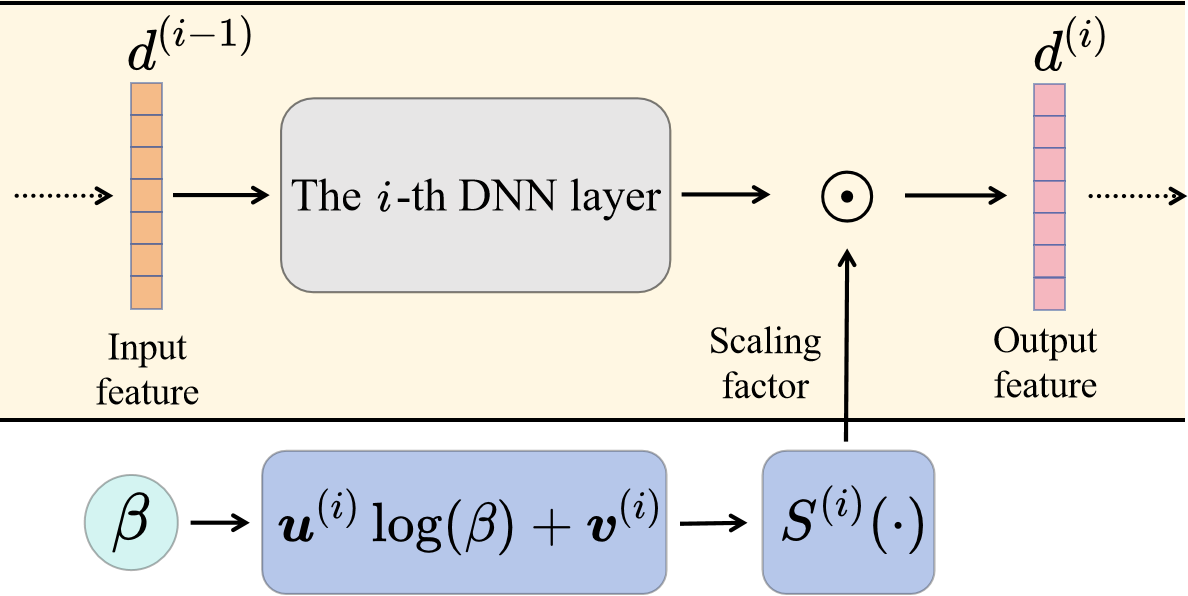}
\caption{Layer-wise architecture of the hypernetwork.}
\label{fig: layer-wise architecture for hypernetwork}
\vspace{-1.5em}
\end{figure}

\subsubsection{\textbf{Convolutional layers}} Consider the $(i-1)$-th layer with $C_{i-1}$ filters, the $i$-th layer with $C_i$ filters and kernel size $K_i$. Let $\boldsymbol{W}^{(i,c)} \in \mathbb{R}^{C_{i-1} \times K_i \times K_i}$ be the weight of the $c$-th filter and $b^{(i,c)}$ be its bias. Like fully connected layers, now introducing the $\beta$ adjustment module, the parameters are adjusted as 
\begin{equation}\label{eq: layer-wise-3}
    \begin{aligned}
        \!\!\!\!\boldsymbol{W}^{(i,c)}(\beta) &\!=\! S^{(i,c)}\!\!\left(\boldsymbol{u}_1^{(i,c)}\!\log(\beta) \!+\! \boldsymbol{v}_1^{(i,c)} \right)\!\odot_{\text{kernel}}\! \boldsymbol{W}^{(i,c)}, \\
         b^{(i)}(\beta) &\!=\! S^{(i,c)}\!\!\left(u_2^{(i,c)}\!\log(\beta) \!+\! v_2^{(i,c)} \right)\!\odot\!  b^{(i,c)}.
    \end{aligned}
\end{equation}
Here, $\boldsymbol{u}_1^{(i,c)}$, $\boldsymbol{v}_1^{(i,c)} \in \mathbb{R}^{C_{i-1}}$, and $u_2^{(i,c)}$, $v_2^{(i,c)} \in \mathbb{R}$. The symbol $\odot_{\text{kernel}}$ denotes element-wise multiplication applied to the $K_i\times K_i$ blocks of $\boldsymbol{W}^{(i,c)}$. Together, the set $\{\boldsymbol{W}^{(i,c)}, \boldsymbol{b}^{(i,c)}, \boldsymbol{u}_1^{(i,c)}, \boldsymbol{v}_1^{(i,c)}, u_2^{(i,c)}, v_2^{(i,c)}\}_{c=1}^{C_i}$ forms the hypernetwork parameters at the $i$-th layer. Compared to the original model, our hypernetwork introduces only $2C_i(C_{i-1}+1)$ additional parameters at this layer.

\subsection{Hyper-VIB objective and training algorithm design}
To learn the optimal hypernetworks parameters $\boldsymbol{\phi}_{H}$ and $\boldsymbol{\psi}_{H}$, our Hyper-VIB approach adopts the following objective function:
\begin{equation}\label{eq: hyper-VIB}
    \mathcal{C}_{\text{Hyper-VIB}}(\boldsymbol{\phi}_{H},\boldsymbol{\psi}_{H}) =
    \mathbb{E}_{\beta \sim U[a,b]} \  \mathcal{C}_{\text{VIB}}\left( \boldsymbol{\phi}_{H}(\beta),\boldsymbol{\psi}_{H}(\beta)\right),
\end{equation}
where the hyperparameter $\beta$ follows the uniform distribution between $[a,b]$. Our optimization problem can be formulated as
\begin{equation}
    \vspace{-0.5em}
    \boldsymbol{\phi}^{*}_{H}, \boldsymbol{\psi}^{*}_{H} = \arg \min_{\boldsymbol{\phi}_{H},\boldsymbol{\psi}_{H}} \mathcal{C}_{\text{Hyper-VIB}}(\boldsymbol{\phi}_{H},\boldsymbol{\psi}_{H}).
\end{equation}
This objective guides the hypernetworks to learn the optimal response parameters $\boldsymbol{\phi}_{H}(\beta)$ and $\boldsymbol{\psi}_{H}(\beta)$ for each fixed $\beta$ within the range $[a,b]$, with the goal of minimizing $\mathcal{C}_{\text{VIB}}$. \par
Using Monte Carlo sampling, we can obtain an unbiased estimation of the gradient for $\mathcal{C}_{\text{Hyper-VIB}}$. 
Specifically, during training, the hyperparameter $\beta$ is sampled $T$ times, and for each sampled $\beta^{(t)}$, given a mini-batch of data $\{\boldsymbol{x}^{(t,m)}\}_{m=1}^M$, the empirical estimation is acquired:
\begin{equation}\label{eq: simeq}
\hspace{-1em}
\begin{aligned}
        \mathcal{C}_{\text{Hyper-VIB}} &\simeq \frac{1}{T} \sum_{t=1}^T \frac{1}{M} \sum_{m=1}^M \left\{ -\log p_{\boldsymbol{\psi}_{H}(\beta^{(t)})}(\boldsymbol{y}^{(t,m)}|\hat{\boldsymbol{z}}^{(t,m)}) \right. \\
        &\!\!\!\! \left. + \beta^{(t)} D_{\text{KL}}\left( p_{\boldsymbol{\phi}_{H}(\beta^{(t)})}(\hat{\boldsymbol{z}}^{(t,m)}|\boldsymbol{x}^{(t,m)}) \middle\| r(\hat{\boldsymbol{z}}^{(t,m)}) \right) \right\}.
    \end{aligned}
\end{equation}
Then, we propose an algorithm to learn the optimal hypernetwork parameters, as summarized in Algorithm~\ref{al: 1}.
\begin{figure}[t]
\vspace{-1em}
\begin{algorithm}[H]
\caption{Training Procedure for Hyper-VIB}  
\label{al: 1} 
\begin{algorithmic}[1] 
    \renewcommand{\algorithmicrequire}{\textbf{Input:}}
    \renewcommand{\algorithmicensure}{\textbf{Output:}}
    \Require {Training dataset $\mathcal{D}$, channel noise variance $\sigma^2$, range $[a,b]$ for $\beta$.}
    \Ensure {Optimized hypernetwork parameters $\boldsymbol{\phi}_{H}, \boldsymbol{\psi}_{H}$.}
    \While{not converged}
        \For{$t = 1$ \textbf{to} $T$}
            \State Sample $\beta^{(t)} \sim U[a,b]$
            \State Select a mini-batch of data $\{(\boldsymbol{x}^{(t,m)}, y^{(t,m)})\}_{m=1}^M$
            
            \For{$m = 1$ \textbf{to} $M$}
                \State Compute feature vector $\boldsymbol{z}^{(t,m)}$ using \eqref{eq: conditional x_k to z_k}
                \State Sample noise $\{\boldsymbol{\varepsilon}^{(l,m)}\}_{l=1}^L \sim \mathcal{N}(\boldsymbol{0}, \sigma^2 \boldsymbol{I})$
                \State Compute received vector $\hat{\boldsymbol{z}}^{(m)} = \boldsymbol{z}^{(m)} + \boldsymbol{\varepsilon}^{(l,m)}$
            \EndFor
        \EndFor
        \State Compute KL-divergence using \eqref{eq: KL divergence}
        \State Compute loss $\mathcal{C}_{\text{Hyper-VIB}}$ using \eqref{eq: simeq}
        \State Update parameters via backpropagation
    \EndWhile
\end{algorithmic}  
\end{algorithm}
\vspace{-2.5em}
\end{figure}
\subsection{Optimality Analysis in the Linear Case}
To validate the design of our hypernetwork architecture, we analyze the optimal response function under linear model assumptions, which have been widely adopted in prior studies such as \cite{bae2023multirate, mackayself}.
Specifically, we consider the input data $\boldsymbol{x} \in \mathbb{R}^n$ and the target output $y \in \mathbb{R}$. On the device side, we consider a linear network $\boldsymbol{A} \in \mathbb{R}^{d \times n}$, which uses deterministic reparameterization. On the network side, we consider a linear network $\boldsymbol{B} \in \mathbb{R}^{d \times 1}$. The data transmission flow is
\begin{equation}\label{eq: liner-model}
    \boldsymbol{z} = \boldsymbol{A}\boldsymbol{x},\ \hat{\boldsymbol{z}} = \boldsymbol{z}+\boldsymbol{\varepsilon},\  \hat{y} =   \boldsymbol{B}^T \hat{\boldsymbol{z}}.
\end{equation}
Here, $\boldsymbol{\varepsilon} \sim \mathcal{N}(0, \sigma^2 \boldsymbol{I})$, and we choose the MSE $d = (y - \hat{y})^2$ to measure the distortion between $y$ and $\hat{y}$.
\begin{theorem}
\vspace{-0.5em}
Consider the linear model described in \eqref{eq: liner-model}. Given the network-side hypernetwork parameters $\boldsymbol{\psi}_{H}(\beta)$, there exist device-side parameters $\boldsymbol{\phi}_{H}(\beta)$ in our hypernetwork such that $\boldsymbol{\phi}_{H}(\beta) = \boldsymbol{\phi}^*(\beta)$ for all $\beta$, where $\boldsymbol{\phi}^*(\beta)$ is the optimal response function defined in \eqref{eq: response}.
\vspace{-0.5em}
\end{theorem}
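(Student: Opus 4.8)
The plan is to specialize the variational objective $\mathcal{C}_{\text{VIB}}$ of \eqref{eq: VML-IB} to the linear model \eqref{eq: liner-model}, treat the given network map $\boldsymbol{B}=\boldsymbol{\psi}_{H}(\beta)$ as fixed, show that the device-side best response is a single $\beta$-dependent scalar gate applied to a constant matrix, and then verify that the sigmoid adjustment module reproduces exactly that gate. First I would substitute \eqref{eq: liner-model} into \eqref{eq: VML-IB}. Under deterministic reparameterization we have $\boldsymbol{\Theta}=\boldsymbol{0}$ and $\boldsymbol{\mu}=\boldsymbol{A}\boldsymbol{x}$, so \eqref{eq: conditional distribution} reads $\mathcal{N}(\boldsymbol{A}\boldsymbol{x},\sigma^2\boldsymbol{I})$ and the KL term \eqref{eq: KL divergence} collapses, up to an $\boldsymbol{A}$-independent constant, to $\tfrac12\,\mathbb{E}_{\boldsymbol{x}}\|\boldsymbol{A}\boldsymbol{x}\|^{2}$. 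Using $\mathbb{E}[\boldsymbol{\varepsilon}]=\boldsymbol{0}$ and the independence of $\boldsymbol{\varepsilon}$, the MSE distortion term expands as $\mathbb{E}[(y-\boldsymbol{B}^{\top}\boldsymbol{A}\boldsymbol{x})^{2}]+\sigma^{2}\|\boldsymbol{B}\|^{2}$, whose last summand is again $\boldsymbol{A}$-free. Writing $\boldsymbol{\Sigma}_{x}=\mathbb{E}[\boldsymbol{x}\boldsymbol{x}^{\top}]$ and $\boldsymbol{c}_{xy}=\mathbb{E}[y\boldsymbol{x}]$, minimizing over $\boldsymbol{A}$ therefore reduces to the convex quadratic with terms $-2\boldsymbol{B}^{\top}\boldsymbol{A}\boldsymbol{c}_{xy}+\boldsymbol{B}^{\top}\boldsymbol{A}\boldsymbol{\Sigma}_{x}\boldsymbol{A}^{\top}\boldsymbol{B}+\tfrac{\beta}{2}\,\mathrm{tr}(\boldsymbol{A}\boldsymbol{\Sigma}_{x}\boldsymbol{A}^{\top})$.

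Next I would compute $\boldsymbol{\phi}^{*}(\beta)=\boldsymbol{A}^{*}(\beta)$ by setting the matrix gradient to zero, obtaining the normal equation $(\boldsymbol{B}\boldsymbol{B}^{\top}+\tfrac{\beta}{2}\boldsymbol{I})\boldsymbol{A}\boldsymbol{\Sigma}_{x}=\boldsymbol{B}\boldsymbol{c}_{xy}^{\top}$, hence $\boldsymbol{A}^{*}(\beta)=(\boldsymbol{B}\boldsymbol{B}^{\top}+\tfrac{\beta}{2}\boldsymbol{I})^{-1}\boldsymbol{B}\,\boldsymbol{c}_{xy}^{\top}\boldsymbol{\Sigma}_{x}^{-1}$ when $\boldsymbol{\Sigma}_{x}$ is invertible. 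The decisive simplification is the rank-one push-through identity $(\boldsymbol{B}\boldsymbol{B}^{\top}+\lambda\boldsymbol{I})^{-1}\boldsymbol{B}=\boldsymbol{B}/(\|\boldsymbol{B}\|^{2}+\lambda)$ with $\lambda=\beta/2$, which yields $\boldsymbol{A}^{*}(\beta)=g(\beta)\,\boldsymbol{A}_{0}$, where $g(\beta)=1/(\|\boldsymbol{B}\|^{2}+\beta/2)$ is a single scalar and $\boldsymbol{A}_{0}=\boldsymbol{B}\,\boldsymbol{c}_{xy}^{\top}\boldsymbol{\Sigma}_{x}^{-1}$ is a fixed matrix. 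Thus the entire $\beta$-dependence of the optimal device response is one global scalar multiplying a constant matrix.

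Finally I would match this to the architecture. For a single bias-free linear layer the device hypernetwork \eqref{eq: layer-wise-1} outputs $\boldsymbol{A}(\beta)=\mathrm{diag}\!\big(S(\boldsymbol{u}\log\beta+\boldsymbol{v})\big)\bar{\boldsymbol{A}}$, i.e. it rescales each row of a base matrix $\bar{\boldsymbol{A}}$ by a sigmoid gate. The key observation, and the reason the module feeds $\log\beta$ rather than $\beta$, is that $S(-\log\beta+v)=1/(1+\beta e^{-v})$ is a rational function of $\beta$ of exactly the required form: choosing $u_{i}=-1$ and $v_{i}=\log(2\|\boldsymbol{B}\|^{2})$ for every row gives $S(-\log\beta+v_{i})=\|\boldsymbol{B}\|^{2}g(\beta)$, so absorbing the constant $\|\boldsymbol{B}\|^{2}$ into the base weights via $\bar{\boldsymbol{A}}=\boldsymbol{A}_{0}/\|\boldsymbol{B}\|^{2}$ makes $\boldsymbol{A}(\beta)=g(\beta)\boldsymbol{A}_{0}=\boldsymbol{A}^{*}(\beta)$ for all $\beta>0$, proving $\boldsymbol{\phi}_{H}(\beta)=\boldsymbol{\phi}^{*}(\beta)$. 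A reassuring consistency check is that the demanded gate $\|\boldsymbol{B}\|^{2}g(\beta)\in(0,1)$ always lies in the sigmoid's range.

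I expect this last matching step to be the main obstacle: the first two steps are routine Gaussian and linear-algebra manipulations, but the force of the theorem is recognizing that the sigmoid-of-$\log\beta$ gate can \emph{exactly} realize the rational scaling $1/(\|\boldsymbol{B}\|^{2}+\beta/2)$ dictated by the ridge-regularized optimum. The one point needing care is the role of $\boldsymbol{B}=\boldsymbol{\psi}_{H}(\beta)$: the construction reproduces the best response for the given network map, and since $\boldsymbol{A}_{0}=\boldsymbol{B}\,\boldsymbol{c}_{xy}^{\top}\boldsymbol{\Sigma}_{x}^{-1}$ is always rank one with row space spanned by the fixed vector $\boldsymbol{\Sigma}_{x}^{-1}\boldsymbol{c}_{xy}$, the row-wise form of the adjustment module is structurally compatible with the optimum, so a single gate per row suffices.
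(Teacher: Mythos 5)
Your proof is correct and follows the paper's overall strategy — specialize $\mathcal{C}_{\text{VIB}}$ to the linear model, use convexity to obtain the ridge-type closed form $\boldsymbol{A}^*(\beta)=(\boldsymbol{B}\boldsymbol{B}^{\top}+\tfrac{\beta}{2}\boldsymbol{I})^{-1}\boldsymbol{B}\,\boldsymbol{c}_{xy}^{\top}\boldsymbol{\Sigma}_x^{-1}$, and then realize the $\beta$-dependence with the sigmoid-of-$\log\beta$ gate, choosing the very same module parameters $u=-1$, $v=\log(2\|\boldsymbol{B}\|^{2})$ — but your matching step takes a genuinely different and cleaner route. Where the paper introduces the SVD $\boldsymbol{B}=\boldsymbol{U}\boldsymbol{\Sigma}V^{\top}$ in \eqref{eq:A*_SVD} and exhibits a two-layer device network, with the somewhat awkward compensation $\boldsymbol{W}^{(2)}=2\boldsymbol{U}$ and $\boldsymbol{u}^{(2)}=\boldsymbol{v}^{(2)}=\boldsymbol{0}$ so that $S^{(2)}(0)=\tfrac12$ restores $\boldsymbol{U}$ in \eqref{eq:W2_beta}, you instead invoke the rank-one push-through identity $(\boldsymbol{B}\boldsymbol{B}^{\top}+\lambda\boldsymbol{I})^{-1}\boldsymbol{B}=\boldsymbol{B}/(\|\boldsymbol{B}\|^{2}+\lambda)$ (valid because $\boldsymbol{B}$ is an eigenvector of the regularized Gram matrix) to show directly that $\boldsymbol{A}^*(\beta)=g(\beta)\boldsymbol{A}_0$ is a single scalar gate times a constant matrix, so one gated layer of the form \eqref{eq: layer-wise-1} suffices. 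This buys two things: first, it makes transparent why a \emph{uniform} per-row gate is enough — in the paper's step \eqref{eq:W1_beta} this silently rests on the fact that all rows of $\boldsymbol{\Sigma}\boldsymbol{y}\boldsymbol{X}^{\top}(\boldsymbol{X}\boldsymbol{X}^{\top})^{-1}$ below the first vanish, so the mismatched diagonal entries $2/\beta$ of $(\boldsymbol{\Sigma}\boldsymbol{\Sigma}^{\top}+\tfrac{\beta}{2}\boldsymbol{I})^{-1}$ never act, whereas your identity needs no such case analysis; second, the single-layer realization stays faithful to the model \eqref{eq: liner-model}, in which the device is literally one linear map $\boldsymbol{A}$. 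The remaining differences are cosmetic: you work with population moments $\boldsymbol{\Sigma}_x$, $\boldsymbol{c}_{xy}$ while the paper uses the empirical matrices $\boldsymbol{X}$, $\boldsymbol{y}$ under the assumption $\mathrm{Rank}(\boldsymbol{X})=n$, and these are interchangeable here, with $\boldsymbol{c}_{xy}^{\top}\boldsymbol{\Sigma}_x^{-1}$ playing the role of $\boldsymbol{y}\boldsymbol{X}^{\top}(\boldsymbol{X}\boldsymbol{X}^{\top})^{-1}$; and both arguments implicitly assume $\boldsymbol{B}\neq\boldsymbol{0}$ (otherwise $\log\|\boldsymbol{B}\|$ is undefined), a degenerate case that is anyway trivial since then $\boldsymbol{A}^*=\boldsymbol{0}$ is realized by a zero base weight.
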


\begin{proof}
Let the training set be $\mathcal{D} = \{\boldsymbol{x}^{(i)}, y^{(i)}\}_{i=1}^N$, organized by a matrix $\boldsymbol{X} = [\boldsymbol{x}^{(1)}, \dots, \boldsymbol{x}^{(N)}] \in \mathbb{R}^{n \times N}$ and a vector $\boldsymbol{y} = [y^{(1)}, \dots, y^{(N)}] \in \mathbb{R}^{1 \times N}$, with $N \gg n$. With MSE, \eqref{eq: VML-IB} yields
\vspace{-0.5em}
\begin{equation}
\vspace{-0.5em}
\begin{aligned}
    \mathcal{C}_{\text{VIB}}(\boldsymbol{\phi}, \boldsymbol{\psi}) 
    &= \frac{1}{N} \sum_{i=1}^N \left\{ \mathbb{E}_{p_{\boldsymbol{\phi}}(\hat{\boldsymbol{z}}^{(i)} | \boldsymbol{x}^{(i)})} \left[ (\hat{y}^{(i)} - y^{(i)})^2 \right] \right. \\
    &\quad \quad \quad \quad \left. + \beta D_{\text{KL}} \left( p_{\boldsymbol{\phi}}(\hat{\boldsymbol{z}}^{(i)} | \boldsymbol{x}^{(i)}) \| p(\hat{\boldsymbol{z}}^{(i)}) \right) \right\},\end{aligned}
\end{equation}
For the linear model in \eqref{eq: liner-model}, the device-side parameters $\boldsymbol{\phi}_{H}(\beta)$ and network-side parameters $\boldsymbol{\psi}_{H}(\beta)$ correspond to $\boldsymbol{A}$ and $\boldsymbol{B}$, respectively.
Based on this model, we have $\hat{\boldsymbol{z}}^{(i)} = \boldsymbol{A} \boldsymbol{x}^{(i)} + \boldsymbol{\varepsilon}^{(i)}$, implying that $p_{\boldsymbol{\phi}}(\hat{\boldsymbol{z}}^{(i)} | \boldsymbol{x}^{(i)}) = \mathcal{N}(\boldsymbol{A} \boldsymbol{x}^{(i)}, \sigma^2 \boldsymbol{I})$. Thus,
\begin{equation}\label{eq: proof-1}
\begin{aligned}
\mathbb{E}_{p_{\boldsymbol{\phi}}(\hat{\boldsymbol{z}}^{(i)} | \boldsymbol{x}^{(i)})} (\hat{y}^{(i)} - y^{(i)})^2 &=  \mathbb{E}[\boldsymbol{B}^T (\boldsymbol{A}\boldsymbol{x}^{(i)}+\boldsymbol{\varepsilon}^{(i)}) - y^{(i)}]^2 \\
&=(\boldsymbol{B}^T \boldsymbol{A}\boldsymbol{x}_i - y_i)^2 + \sigma^2 \boldsymbol{B}^T \boldsymbol{B}.
\end{aligned}
\vspace{-0.5em}
\end{equation}
From \eqref{eq: KL divergence}, we have
\begin{equation}\label{eq: proof-2}
\begin{aligned}
&D_{\text{KL}} ( p_{\boldsymbol{\phi}}(\hat{\boldsymbol{z}}^{(i)} | \boldsymbol{x}^{(i)}) \| p(\hat{\boldsymbol{z}}^{(i)})\\
 & \quad\quad\quad\quad  = \frac{1}{2} \left\{(\boldsymbol{A} \boldsymbol{x}_i)^T (\boldsymbol{A} \boldsymbol{x}_i) +  d(\sigma^2 -  \log \sigma^2 - 1) \right\}.
  \end{aligned}  
\end{equation}
Combining \eqref{eq: proof-1} and \eqref{eq: proof-2}, $\mathcal{C}_{\text{VIB}}(\boldsymbol{A}, \boldsymbol{B})$ can be derived as
\begin{equation}\label{eq: proof-3}
\begin{aligned}
    \mathcal{C}_{\text{VIB}}(\boldsymbol{A}, \boldsymbol{B}) 
    = \frac{1}{N}& \left\{  ||\boldsymbol{y} - \boldsymbol{B}^T \boldsymbol{A} \boldsymbol{X}||^2 +  \frac{\beta}{2} \text{tr}(\boldsymbol{X}^T \boldsymbol{A}^T \boldsymbol{A} \boldsymbol{X})  \right\}\\
    & + \sigma^2 \boldsymbol{B}^T \boldsymbol{B}+ \frac{\beta}{2} \left( \sigma^2 d - d \log \sigma^2 - d \right).
\end{aligned}
\end{equation}
It is evident that $\mathcal{C}_{\text{VIB}}(\boldsymbol{A}, \boldsymbol{B})$ in \eqref{eq: proof-3} is a convex function with respect to $\boldsymbol{A}$. The gradient with respect to $\boldsymbol{A}$ is
\begin{equation}\label{eq: proof-4}
   \frac{\partial\mathcal{C}_{\text{VIB}}(\boldsymbol{A}, \boldsymbol{B})}{\partial \boldsymbol{A}} \!\!=\!\!  \frac{1}{N}\!\!\left\{\!-2 \boldsymbol{B}\boldsymbol{y}X^T \!\!\!+\! 2\boldsymbol{B}\boldsymbol{B}^T\!\!\boldsymbol{A}\boldsymbol{X}\boldsymbol{X}^T \!\!\!+\! \beta \boldsymbol{A}\boldsymbol{X}\boldsymbol{X}^T\!\right\}\!.
\end{equation}
Since $\boldsymbol{X} \in \mathbb{R}^{n \times N}$ and $N \gg n$, $\boldsymbol{X}\boldsymbol{X}^T$ is invertible by assuming $\text{Rank}(\boldsymbol{X}) = n$. Setting the gradient in \eqref{eq: proof-4} to zero yields
\begin{equation}
    \boldsymbol{A}^* = \left(\boldsymbol{B}\boldsymbol{B}^T+ \frac{1}{2}\beta \boldsymbol{I} \right)^{-1} \boldsymbol{B}\boldsymbol{y}\boldsymbol{X}^T (\boldsymbol{X}\boldsymbol{X}^T)^{-1},
\end{equation}
which is the closed-form solution for the optimal response function on the device side with fixed $\boldsymbol{B}$. Clearly, $\boldsymbol{A}^*$ depends on $\beta$ and varies with $\beta$. 

We now prove that there exist parameters in our hypernetwork that can represent $\boldsymbol{A}^*$.
Consider the singular value decomposition of $\boldsymbol{B} = \boldsymbol{U} \boldsymbol{\Sigma} V^T$, where $\boldsymbol{U} \in \mathbb{R}^{d \times d}$ is a unitary orthogonal matrix, $\boldsymbol{\Sigma} \in \mathbb{R}^{d \times 1}$ is a vector with only the first element equal to $||\boldsymbol{B}||$, and the rest are zeros, while $V$ is essentially a scalar 1. Then $\boldsymbol{A}^*$ can be expressed as
\begin{equation}
    \label{eq:A*_SVD}
    \boldsymbol{A}^* = \boldsymbol{U}\left(\boldsymbol{\Sigma} \boldsymbol{\Sigma}^T +\frac{1}{2}\beta \boldsymbol{I}\right)^{-1}\boldsymbol{\Sigma} \boldsymbol{y}\boldsymbol{X}^T (\boldsymbol{X}\boldsymbol{X}^T)^{-1}.
\end{equation}\par
Consider the device-side network as a two-layer structure with
$\boldsymbol{W}^{(2)} = 2\boldsymbol{U}$ and $\boldsymbol{W}^{(1)} = \frac{1}{||\boldsymbol{B}||^2}\boldsymbol{\Sigma} \boldsymbol{y} \boldsymbol{X}^T (\boldsymbol{X}\boldsymbol{X}^T)^{-1}$. For the first-layer $\beta$ adjustment module, let $\boldsymbol{u}^{(1)} = -1 \cdot \mathbf{1}$ and $\boldsymbol{v}^{(1)} = (\ln 2 + 2 \ln ||B||) \cdot \mathbf{1}$, where $\mathbf{1} \in \mathbb{R}^{d}$ is the all-ones vector. Therefore, we have
\begin{equation}
S^{(1)}(\boldsymbol{u}^{(1)}\log(\beta)\!+\!\boldsymbol{v}^{(1)}) \!=\! 
\frac{1}{1\!+\!\frac{1}{2}{||\boldsymbol{B}||}^{-2}\beta} \cdot \mathbf{1}\!=\! \frac{||\boldsymbol{B}||^2}{||\boldsymbol{B}||^2\!+\!\frac{1}{2}\beta} \cdot \mathbf{1}.
\end{equation}
Continuing the derivation, we get
\begin{equation}
\label{eq:W1_beta}
\begin{aligned}
    \boldsymbol{W}^{(1)}(\beta) &= S^{(1)}(\boldsymbol{u}^{(1)}\log(\beta) + \boldsymbol{v}^{(1)}) \odot_{\text{row}} \boldsymbol{W}^{(1)} \\
    &= \text{diag} \!\left( \!\frac{||\boldsymbol{B}||^2}{||\boldsymbol{B}||^2 + \frac{1}{2}\beta} \!\right)\! \cdot \frac{1}{||\boldsymbol{B}||^2} \boldsymbol{\Sigma} \boldsymbol{y} \boldsymbol{X}^T (\boldsymbol{X}\boldsymbol{X}^T)^{-1} \\
    &= \left(\boldsymbol{\Sigma} \boldsymbol{\Sigma}^T + \frac{1}{2}\beta \right)^{-1} \boldsymbol{\Sigma} \boldsymbol{y} \boldsymbol{X}^T (\boldsymbol{X}\boldsymbol{X}^T)^{-1},
\end{aligned}
\end{equation}
where \(\text{diag}(x)\) represents a diagonal matrix in \(\mathbb{R}^{d \times d}\) with all diagonal elements are $x$. For the second layer, setting both \(\boldsymbol{u}^{(2)}\) and \(\boldsymbol{v}^{(2)}\) to zero vectors yields
\begin{equation}
    \label{eq:W2_beta}
    \boldsymbol{W}^{(2)}(\beta) = S^{(2)}(\boldsymbol{u}^{(2)}\log(\beta) + \boldsymbol{v}^{(2)}) \odot_{\text{row}} \boldsymbol{W}^{(2)} = \boldsymbol{U}.
\end{equation}
According to \eqref{eq:A*_SVD}, \eqref{eq:W1_beta} and \eqref{eq:W2_beta}, we have 
\begin{equation}
    \boldsymbol{A}^* = \boldsymbol{W}^{(2)}(\beta) \boldsymbol{W}^{(1)}(\beta).
\end{equation}
Finally, by setting $\boldsymbol{\phi}_{H} (\beta) = \boldsymbol{W}^{(2)}(\beta) \boldsymbol{W}^{(1)}(\beta)$, we conclude that for any $\beta$, when $\boldsymbol{\psi}_{H}(\beta) = \boldsymbol{B}$ is given, $\boldsymbol{\phi}_{H} (\beta)$ is the optimal response function for $\beta$.
\end{proof}

\section{Numerical Results and Discussions}
This section evaluates the performance of the proposed Hyper-VIB approach in device-network collaborative intelligence, highlighting its efficiency in enabling task-oriented communications.

\subsection{Experimental Settings}
We consider classification and regression tasks. For classification, we use the MNIST \cite{lecun1998gradient} and CIFAR-10 \cite{krizhevsky2009learning} datasets. For regression, we study wireless localization in multipath environments using simulated Bluetooth and WiFi ranging data. The simulated datasets for received signals are generated based on the Bluetooth two-way channel model \cite{tariq2024reduced} and the WiFi channel model \cite{kotaru2015spotfi}, with the target output being the estimated line-of-sight (LoS) propagation time between transceivers.
To assess the impact of the hypernetwork, we compare Hyper-VIB with the classical VIB method \cite{shao2021learning,qian2021variational}. For a fair comparison, both Hyper-VIB and VIB use identical DNN architectures, including fully connected layers and convolutional neural networks.
For Hyper-VIB, a single training run is conducted with $\beta$ values ranging from $10^{-5}$ to $1$. For VIB, a grid search is performed, where training is conducted separately for each $\beta$ to determine the optimal value.
We use top-1 accuracy for classification performance and mean squared error (MSE) of propagation time for localization accuracy.
The signal-to-noise ratio (SNR) is set to 20 dB, and the feature dimension $\boldsymbol{z}$ is fixed at 64.

\begin{table}[]
    \vspace{0.5em}
    \centering
    \begin{tabular}{c c c c c}
    \hline
     \textbf{Tasks} & \textbf{Dataset}  & \textbf{Methods}  & \textbf{Parameters} & \textbf{Time(s)} \\
     \hline
     \multirow{4}{*}{Classification}& \multirow{2}{*}{MNIST} & VIB & $18.81 \times 10^4$ & 3,103\\
   &   & Hyper-VIB & $18.94 \times 10^4$ &  696 \\ 
   \cline{2-5}
    &    \multirow{2}{*}{CIFAR-10} & VIB & $40.46 \times 10^5$ &   236,800\\
   &   & Hyper-VIB & $41.81 \times 10^5$ &  31,640\\
      \hline
   \multirow{4}{*}{Regression}& \multirow{2}{*}{Bluetooth} & VIB & $55.92 \times 10^4$ & 1,759\\
   &   & Hyper-VIB & $56.21 \times 10^4$ & 479 \\ 
   \cline{2-5}
    &    \multirow{2}{*}{WIFI} & VIB & $65.42 \times 10^4$ &   2,042\\
   &   & Hyper-VIB & $66.48 \times 10^4$ &  584\\
      \hline
      
    \end{tabular}
    \caption{Comparison of parameter counts and training time between VIB and proposed Hyper-VIB.}
    \label{tab:my_label}
    \vspace{-1.8em}
\end{table}

\subsection{Experimental Results}
      
In our experiments, to determine the hyperparameter $\beta$, Hyper-VIB is trained once over the range $[10^{-5}, 1]$. After training, we sample $\beta$ every 0.5 orders of magnitude and use the hypernetwork to directly generate the corresponding model parameters for evaluation, selecting the $\beta$ that achieves the best performance. 
In contrast, VIB performs a grid search over the same $\beta$ range, training a separate model for each sampled value, and selects the $\beta$ that yields the best performance. \par
Firstly, we compare the training time and parameter count of Hyper-VIB and VIB across multiple datasets.
As shown in Table~\ref{tab:my_label}, Hyper-VIB achieves a substantial improvement in training efficiency, reducing training time by approximately 80\% compared to VIB. This efficiency gain is attributed to the introduction of a hypernetwork, which allows Hyper-VIB to be trained only once and eliminates the need for retraining across all $\beta$ values, as required by VIB’s grid search.
In terms of parameter count, Hyper-VIB is nearly identical to VIB, as only the $\beta$ adjustment module in the hypernetwork adds a negligible number of additional parameters.\par
Then, we compare the performance of Hyper-VIB and VIB.
As shown in Fig.~\ref{fig:four_results}, across all four datasets, Hyper-VIB consistently outperforms or rivals VIB for all $\beta$ values, and when both methods achieve peak performance respectively, the corresponding trade-off hyperparameter $\beta$ exhibits closely aligned values.
This demonstrates the effectiveness of the hypernetwork in Hyper-VIB at approximating the best-response function from $\beta$ to model parameters.
As a result, Hyper-VIB requires only a single training run to generate model parameters for all $\beta$ values, evaluate their performance, and efficiently identify the optimal one, thereby significantly reducing training time and computational cost.\par
It can be observed from Fig. \ref{fig:four_results} that a smaller $\beta$ improves the accuracy in classification tasks, and reduces the regression errors in localization tasks.
In addition, if $\beta$ is large and far from the optimal region, VIB shows a fast drop in performance.
This is because VIB trains separate models for each fixed $\beta$, making it highly sensitive to hyperparameter values. Instead, Hyper-VIB undergoes a single training run over all values of $\beta$ and shares most model weights through the hypernetwork. As such, it can maintain a relatively better performance with a suboptimal $\beta$.
Therefore, within a wide range of $\beta$ values, Hyper-VIB demonstrates a significant performance gain over VIB, highlighting its superior robustness.

\begin{figure}[t]
    \centering
    \includegraphics[width=0.98\linewidth]{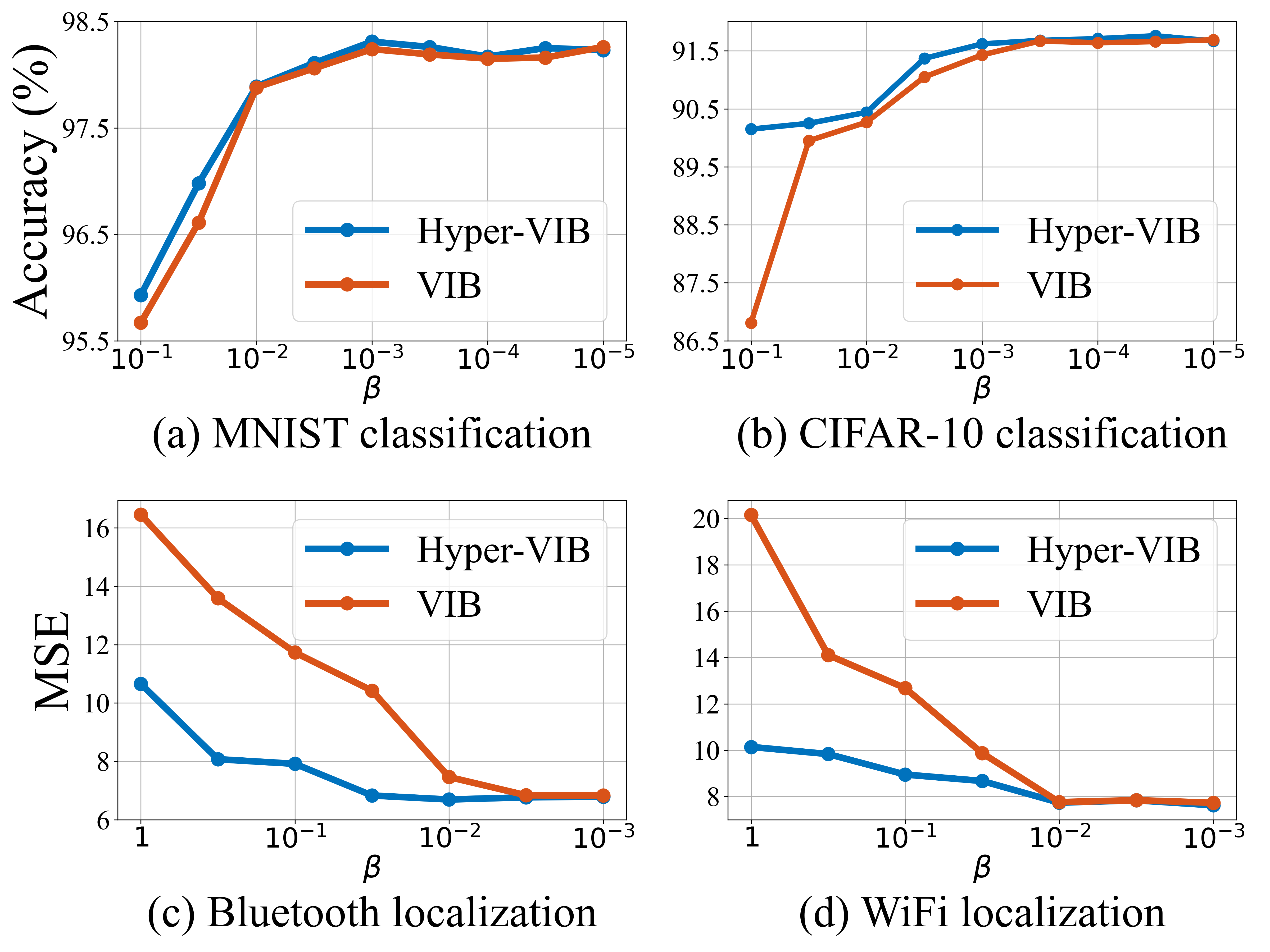}
    \caption{Performance comparisons of our Hyper-VIB scheme with the conventional VIB benchmark, under varying $\beta$ values for different tasks.}
    \label{fig:four_results}
    \vspace{-0.3em}
\end{figure}

\section{Conclusion}
In this paper, we proposed Hyper-VIB, a novel hypernetwork-enhanced information bottleneck framework for efficient task-oriented communications in 6G collaborative intelligent systems. Our approach leverages IB theory to jointly train the device-side and network-side models, which can achieve the minimal communication overhead and maximal task accuracy simultaneously. Hyper-VIB employs hypernetworks to optimize the trade-off hyperparameter $\beta$, which balances task accuracy against communication costs. Specifically, the hypernetwork generates approximately optimal DNN parameters for all $\beta$ values within a single training phase, enabling efficient hyperparameter selection. We further provide theoretical validation through linear case analysis of the hypernetwork design. Experiments verify that Hyper-VIB achieves higher accuracy and training efficiency than classical VIB in both classification and regression tasks.




\bibliographystyle{IEEEtran}
\bibliography{bibliofile}

\clearpage
\appendices

\end{document}